\def\hybrid{\topmargin -20pt  \oddsidemargin 0pt
      \headheight 0pt   \headsep 0pt
      \textwidth 6.25in 
      \textheight 9.5in 
      \marginparwidth .875in
      \parskip 5pt plus 1pt   \jot = 1.5ex}
\numberwithin{equation}{section}
\newtheorem{theorem}{Theorem}[section]
\newtheorem{lemma}[theorem]{Lemma}
\newtheorem{corollary}[theorem]{Corollary}
\newcommand{\bbC}{\mathbb{C}}
\newcommand{\bbZ}{\mathbb{Z}}
\newcommand{\bbP}{\mathbb{P}}
\newcommand{\rmH}{\mathrm{H}}
\newcommand{\calO}{\mathcal{O}}
\newcommand{\calE}{\mathcal{E}}
\newcommand{\longto}{\longrightarrow}
\newcommand{\Fix}{\mathrm{Fix}}
\newcommand{\id}{\mathrm{id}}
\newcommand{\rk}{\mathrm{rk}}
\def\dbar{\bar\partial}
\newcommand{\D}{\bar{D}}
\DeclareMathOperator{\Ext}{Ext}
\newcommand{\beqa}{\begin{equation}}
\newcommand{\eeqa}{\end{equation}}
\theoremstyle{definition}
\renewcommand{\labelenumi}{\theenumi}
\renewcommand{\theenumi}{(\roman{enumi})}
\begin{document}
\thispagestyle{empty}

\rightline{}
\vspace{2truecm}
\centerline{\bf  \Large $SU(5)$ Heterotic Standard Model Bundles}

\vspace{0.5truecm}

\renewcommand{\thefootnote}{}

\centerline{Bj\"orn Andreas and Norbert Hoffmann}

\footnotetext{This work was supported by the SFB/647 ``Space-Time-Matter. Analytic and Geometric Structures'' of the DFG (German Research Foundation). B. A. is supported by project DFG-SFB 647/A3, N. H. is supported by project DFG-SFB 647/A11.}

\vspace{.6truecm}

\centerline{{\em Institut f\"ur Mathematik, Freie Universit\"at Berlin}}
\centerline{{\em Arnimallee 3, 14195 Berlin, Germany}}

\begin{abstract}\noindent
We construct a class of stable $SU(5)$ bundles on an elliptically fibered Calabi-Yau
threefold with two sections, a variant of the ordinary Weierstrass fibration, which admits
a free involution. The bundles are invariant under the involution, solve the topological 
constraint imposed by the heterotic anomaly equation and give three generations
of Standard Model fermions after symmetry breaking by Wilson lines of the intermediate $SU(5)$ GUT-group to the Standard Model gauge group. Among the solutions we find some which can be perturbed 
to solutions of the Strominger system. Thus these solutions provide a step toward the construction of phenomenologically realistic heterotic flux compactifications via non-K\"ahler deformations of Calabi-Yau geometries with bundles. This particular class of solutions involves a rank two hidden sector bundle and does not require background fivebranes for anomaly cancellation. 
\end{abstract}

\newpage

\section{Introduction}
Compactifications of the $E_8\times E_8$ heterotic string on non-simply connected Calabi-Yau threefolds
with ${\mathbb Z}_2$ fundamental group provide one possibility to obtain the Standard Model of elementary particle physics in the low energy limit of string theory. These compactifications require 
the specification of a stable $SU(5)$ gauge bundle $V$ which breaks the visible $E_8$ gauge group
to an $SU(5)$ Grand Unified Theory (GUT) group. A ${\mathbb Z}_2$ Wilson line is then used to 
break the intermediate GUT group to the Standard Model gauge group $SU(3)\times SU(2)\times U(1)$. 
In addition to $V$ one has to specify a second stable $SU(n)$ bundle $V_{hid}$ which is used to break the hidden sector $E_8$ gauge group. However, $V_{hid}$ is often taken to be trivial and so 
an unbroken hidden sector $E_8$ gauge group is left.

One way to realize such a compactification is to start first with a simply connected Calabi-Yau threefold $X$ which admits a free involution $\tau_X$ and construct a polystable bundle $V\oplus V_{hid}$ on $X$
which is invariant under $\tau_X$ and with second Chern classes satisfying the topological 
constraint, eqn.\ref{topano} below, imposed by the heterotic anomaly equation and third Chern classes
satisfying the phenomenological constraint eqn.\ref{matter}. The quotient $X/{\tau_X}$ has then the required ${\mathbb Z}_2$ fundamental group and as the bundle $V\oplus V_{hid}$ is $\tau_X$-invariant
it descends to a bundle on $X/{\tau_X}$ giving three net-generations of Standard Model fermions. In summary, these compactifications state the mathematical problem to specify
\begin{itemize}
\item a smooth Calabi-Yau threefold $X$ which admits a free involution $\tau_X\colon X\to X$,
\item a stable $\tau_X$-invariant $SU(5)$ vector bundle $V$ and a stable $\tau_X$-invariant $SU(n)$ vector bundle $V_{hid}$ (with $n\leq 8$) on $X$,
\end{itemize}
which have to satisfy
\begin{enumerate}
\item $c_2(X)-c_2(V)-c_2(V_{hid})=[W]$ is an effective curve class (or zero),\label{topano}
\item $c_3(V)/2=\pm 6$, \ \ \ $c_3(V_{hid})=0.$ \label{matter}
\end{enumerate}
In the past decade there has been an intensive search for triples $(X,V,V_{hid})$ which satisfy these constraints. A solution to these constraints is given by the Schoen Calabi-Yau threefold $X$ and $V$ a stable extension bundle of ${\mathbb Z}_2$-invariant spectral cover bundles \cite{Don1}, \cite{Don2}. A version of this model leads precisely to the Minimal Supersymmetric Standard Model (MSSM) with no exotic matter \cite{BDon3} and a single pair of Higgs. 

Another class of Calabi-Yau threefolds which admits a free involution has been constructed in \cite{ACK99}. By contrast to the elliptically fibered Calabi-Yau threefolds with one section, as studied for instance in \cite{FMW}, the Calabi-Yau threefolds in \cite{ACK99} use a specific elliptic fibration type
which has two sections and allows to construct a free involution. There have been various attempts
\cite{ACK99}, \cite{AC1}, \cite{AC2} to construct invariant stable bundles (using various bundle construction methods) on this class of Calabi-Yau threefolds but none of them succeeded completely and solved the above constraints. For instance, in \cite{AC1} either the Standard Model gauge group 
times an additional $U(1)$, or just the Standard Model gauge group but with additional exotic matter
has been obtained; in \cite{AC2} a rank five bundle has been constructed via deformation of the direct
sum of a stable rank four bundle and the trivial bundle, however, only some necessary conditions for the existence of such a stable invariant deformation have been checked. 

The primary goal of this note is to construct a class of stable $\tau_X$-invariant $SU(5)$ bundles on the class of manifolds of \cite{ACK99} which satisfy all of the above constraints. To construct the bundles we will apply the bundle construction method of \cite{AH}. The bundles are given by non-trivial extensions of $\tau_X$-invariant $SU(4)$ bundles by $\tau_X$-invariant line bundles.

The question if a given solution to (i) can be modified such that it actually solves the anomaly equation on the level of differential forms has been another motivation for this note. (For this recall that (i) is just the integrability condition for the existence of a solution to equation (\ref{anom}) below.) This question is naturally embedded into the search for heterotic flux compactifications which are characterized by a Hermitian structure with $(1,1)$-form $\omega$, a holomorphic $(3,0)$-form $\Omega$ and a gauge bundle on the internal manifold.
Supersymmetry and anomaly cancellation impose the conditions: the internal manifold has to be conformal balanced (\ref{balanced}), the gauge bundle has to satisfy the Hermitian-Yang-Mills equation (\ref{DUY}) and the flux has to satisfy the anomaly equation (\ref{anom}) (cf.~\cite{Strom},\cite{FuYau})
\begin{align}
d(||\Omega||_\omega \omega^2) & = 0, \label{balanced}\\
F^{2,0}=F^{0,2}=0, \ \ \  F\wedge \omega^2&=0,\label{DUY}\\
i\partial\dbar\omega - \alpha'(tr(R\wedge R)-tr(F\wedge F))&=0\label{anom},
\end{align}
where $R$ is the curvature of a unitary connection on the tangent bundle of the internal manifold. This system of equations is usually called {\it the Strominger system} (note in (\ref{anom}) the case $[W]=0$ 
is assumed as otherwise (\ref{anom}) receives a contribution of a current which integrates
to one in the direction transverse to a single curve wrapped by a fivebrane). 

First examples of solutions of the Strominger system have been obtained in \cite{Yau1, FuYau, FTY, LiYau}. One difficulty in obtaining smooth solutions lies in the fact that in general $\omega$ is not closed and many theorems of K\"ahler geometry and thus methods of algebraic geometry do not apply. One approach to obtain solutions is to simultaneously perturb a Calabi-Yau threefold and a polystable bundle over it and so avoid the direct construction of non-K\"ahler manifolds with stable bundles. This approach has been used in \cite{LiYau} where it is shown that a deformation of the holomorphic structure on the direct sum of the tangent bundle and the trivial bundle of a given Calabi-Yau threefold leads to a smooth solution of the Strominger system whereas the original Calabi-Yau space is perturbed to a non-K\"ahler space. Inspired by the method developed in \cite{LiYau}, this result has been extended in \cite{AGF1}, \cite{AGF2} to the case of an arbitrary polystable bundle $W$ over a Calabi-Yau threefold which satisfies 
$c_2(X)=c_2(W)$. A result along these lines has been originally conjectured using a different framework in \cite{Wi86} with evidence given in \cite{WiWi87,WuWi87}.

Thus if $c_2(X)=c_2(V)+c_2(V_{hid})$, in condition (i) above, we can apply the results of \cite{AGF1}, \cite{AGF2} and obtain solutions of the Strominger system. 

In section 2 we briefly review the relevant geometrical properties of the cover Calabi-Yau threefold $X$, the action of the free involution $\tau_X$ and the geometry of quotient space $X/\tau_X$ which has the required ${\mathbb Z}_2$ fundamental group. For more details and proofs we refer to \cite{ACK99} and \cite{AC2}. In section 3, we give a broad outline of the bundle construction of \cite{AH} which we will apply to construct a class of stable $\tau_X$-invariant $SU(5)$ bundles. In section 4 we begin with constructing stable $\tau_B$-invariant $SU(2)$ bundles on the Hirzebruch surface $B = \bbP^1 \times \bbP^1$. These bundles serve as
input bundles for the bundle extension in the next section. In section 5 we construct stable $\tau_X$-invariant $SU(4)$ bundles. In section 6 we construct a class of stable $\tau_X$-invariant $SU(5)$ bundles. In section 7 
we show that these bundles are capable to solve the constraints (i) and (ii). In section 8 we will specify a $\tau_X$-invariant hidden sector $SU(2)$ bundle and solve the anomaly constraint without invoking a number of background fivebranes. This solution can be deformed to 
a solution of the Strominger system. 

\section{The Calabi-Yau threefold and its quotient }

In this section we briefly review the main geometrical properties of the cover Calabi-Yau threefold and of its quotient. For more details we refer to \cite{ACK99} and \cite{AC1}.

Restricting to elliptic Calabi-Yau threefolds $X$, we search for a free involution $\tau_X$ which preserves the fibration structure and holomorphic threeform of $X$.
If there is some involution preserving the fibration structure then this must project to some (not necessarily free acting) involution $\tau_B$ in the base $B$.
In order to realize this it turns out that we have to require that the elliptically fibered Calabi-Yau threefold admits two sections $\sigma_1$ and $\sigma_2=\tau_X^*\sigma_1$.
Two possibilities to realize an elliptically fibered Calabi-Yau threefold with two sections have been investigated.
One possibility is to search for an elliptically fibered Calabi-Yau threefold with a changed type of elliptic fiber so that the global fibration has then besides the usually assumed single section a second one \cite{ACK99}.
Alternatively, one can specialize the Weierstrass model to force a second section and resolve a curve of $A_1$ singularities that occur in this process \cite{Don99}.
\vskip 0.2cm
\noindent{\bf The cover Calabi-Yau threefold $X$ with two sections}

We consider a Calabi-Yau threefold $\pi:X\to B$ elliptically fibered over the Hirzebruch surface $B={\mathbb P}^1\times {\mathbb P}^1$.
Let $X$ be the closed subvariety in the weighted projective space bundle $\bbP_{1,2,1}( K_B^{-1} \oplus K_B^{-2} \oplus \calO_B)$ over $B$ given by a generalized Weierstrass equation
\beqa\label{weier}
  y^2 = x^4+ax^2z^2+bxz^3+cz^4
\eeqa
of weighted degree $4$, where $[x:y:z]$ are weighted homogenous coordinates on $\bbP_{1, 2, 1}$ and the coefficients $a,b,c$ are appropriate sections of $K_B^{-n}$ with $n=2,3,4$, respectively.

(More precisely, we let $\bbC^*$ act linearly with weights $1, 2, 1$ on the vector bundle
\beqa
  \calE := K_B^{-1} \oplus K_B^{-2} \oplus \calO_B \longto B
\eeqa
and denote by $x, y, z$ the projections of $\calE$ onto its summands $K_B^{-1}, K_B^{-2}, \calO_B$, respectively.
Then the cone $\hat{X}$ over $X$ is the inverse image of the zero section under the morphism
\beqa
  y^2 - x^4 - ax^2z^2 - bxz^3 - cz^4: \calE \longto K_B^{-4}
\eeqa
of varieties over $B$. The subvariety $\hat{X} \subseteq \calE$ is $\bbC^*$-invariant, and $X$ is by definition the quotient of $\hat{X}$ minus the zero section modulo $\bbC^*$.)

Asuming that $a, b, c$ are sufficiently generic, $X$ is a Calabi-Yau threefold \cite{ACK99}, and the fibration $\pi: X \to B$ admits two cohomologically inequivalent sections $\sigma_1, \sigma_2: B \to X$;
in each fiber, they are given by the two points $[x:y:z] = [1 : \pm 1 : 0]$ in $\bbP_{1,2,1}$.

Let $c_i := c_i( B)$ denote the Chern classes of $B$. We denote the class of the divisor $\sigma_{\nu}( B)$ in $X$ again by $\sigma_{\nu}$, and put $\Sigma:=\sigma_1+\sigma_2$.
Then $\sigma_1 \cdot \sigma_2 = 0$; we also note the adjunction relations $\sigma_{\nu}^2=-\pi^*c_1\cdot \sigma_{\nu}$ and $\Sigma^2=-\pi^*c_1\cdot \Sigma$.
The Chern classes of $X$ are given by \cite{ACK99} 
\beqa
c_1(X)=0, \ \ \ c_2(X)=6\pi^*c_1\cdot \Sigma+\pi^*(c_2+5c_1^2), \ \ \  c_3(X)=-36\pi^*c_1^2.
\eeqa
The Hodge numbers and Euler characteristic of $X$ are given by
\beqa
h^{1,1}(X)=4, \ \ \ h^{2,1}(X)=148, \ \ \ e(X)=-288.
\eeqa
Let us now recall the procedure of \cite{ACK99} used to obtain a free involution $\tau_X$ on $X$.
\vskip 0.2cm
\noindent{\bf The involution $\tau_X$}

We start with the involution $\tau: z \mapsto -z$ of $\bbP^1 = \bbC \cup \{ \infty\}$, and let $\tau_B = \tau \times \tau$ denote the induced involution of $B = \bbP^1 \times \bbP^1$.
In local affine coordinates, $\tau_B$ is thus given by 
\beqa
\tau_B \colon (z_1,z_2) \mapsto (-z_1,-z_2),
\eeqa
and the fixed point set $\Fix(\tau_B)$ consists of the $4$ points $(0,0), (0,\infty), (\infty, 0)$ and $(\infty,\infty)$. 

$\tau_B$ induces an involution on the line bundle $K_B^{-n}$ over $B$, and hence also on its global sections; we still denote these induced involutions by $\tau_B$ and note that the diagram
\beqa \xymatrix{
  K_B^{-n} \ar[d] \ar[r]^{\tau_B} & K_B^{-n} \ar[d]\\
  B \ar[r]^{\tau_B} & B
} \eeqa
commutes.
Explicitly, a basis for the global sections of $K_B^{-n} \cong \calO_{\bbP^1 \times \bbP^1}( 2n, 2n)$ is given by the monomials $z_1^p z_2^q$ over $\bbC \times \bbC \subset B$ with $0 \leq p, q \leq 2n$, and $\tau_B$ acts on them as
\beqa
  \tau_B \colon z_1^p z_2^q \mapsto (-1)^{p+q} z_1^p z_2^q.
\eeqa

If we want to lift $\tau_B$ to a free involution $\tau_X$ on $X$, then we need in particular a free involution on the fibers over $\Fix( \tau_B)$.
A candidate for this is given by the involution
\beqa
  [x:y:z] \mapsto [-x:-y:z]
\eeqa
on $\bbP_{1, 2, 1}$. This map globalizes to the $\bbC^*$-equivariant involution
\beqa
  \tau_{\calE} := (-\tau_B) \oplus (-\tau_B) \oplus \tau_B \quad\text{on}\quad \calE = K_B^{-1} \oplus K_B^{-2} \oplus \calO_B,
\eeqa
which again makes the following diagram commute:
\beqa \xymatrix{
  \calE \ar[d] \ar[r]^{\tau_{\calE}} & \calE \ar[d]\\
  B \ar[r]^{\tau_B} & B
} \eeqa
In order to ensure $\tau_{\calE}( \hat{X}) \subseteq \hat{X}$, we require that the sections $a, b, c$ satisfy
\beqa
  \tau_B( a) = a, \quad \tau_B( b) = -b, \quad \tau_B( c) = c,
\eeqa
which means explicitly that $b$ contains only monomials $z_1^pz_2^q$ with $p+q$ odd, whereas $a$ and $c$ contain only such with $p+q$ even.
Then $\tau_{\calE}$ restricts to an involution on $\hat{X}$, which is still $\bbC^*$-equivariant and descends to an involution $\tau_X$ on $X$ such that the diagram 
\beqa \xymatrix{
  X \ar[d]_{\pi} \ar[r]^{\tau_X} & X \ar[d]^{\pi}\\
  B \ar[r]^{\tau_B} & B
} \eeqa
commutes. The discriminant $\Delta$ of the elliptic fibration remains generic as enough terms in $a,b,c$ survive, so $X$ is still smooth, and we can moreover assume that $\Delta$ is disjoint from $\Fix(\tau_B)$. 
Then $\tau_X$ is actually free, since the fibers of $X$ over $\Fix( \tau_B) \subset B$ are smooth elliptic curves and $\tau_X$ acts on them as a translation by a $2$-torsion point.

For such specialized elliptic fibrations $X$ over $B = \bbP^1 \times \bbP^1$, we thus obtain a free involution $\tau_X$ on $X$ over the involution $\tau_B$ on $B$; see \cite{ACK99} for more details.
\vskip 0.2cm
\noindent{\bf The quotient Calabi-Yau space $X/{\tau_X}$}

The holomorphic threeform on $X$ is $\tau_X$-invariant, so $X/\tau_X$ is a smooth Calabi-Yau threefold with $\pi_1(X/\tau_X)={\mathbb Z}_2$,
and its Hodge numbers and Euler characteristic are
\beqa
h^{1,1}(X/\tau_X)=3, \ \ \ h^{2,1}(X/\tau_X)=75, \ \ \ e(X/\tau_X)=-144,
\eeqa
as is shown in \cite{ACK99}. Note that one K\"ahler class is lost: all divisor classes on $B = \bbP^1 \times \bbP^1$ are $\tau_B$-invariant, but $\tau_X$ exchanges the two sections $\sigma_1$ and $\sigma_2$ of $X$ over $B$,
so only multiples of their sum $\Sigma$ are $\tau_X$-invariant divisor classes on $X$. We also note that the number of complex structure deformations drops due to the special choice of $a,b,c$.

\section{Outline of the bundle construction}
The stable $\tau_X$-invariant $SU(5)$ bundle $V_5$ on $X$ will be constructed as a non-trivial extension. For this we will use the bundle construction of \cite{AH} which we will now briefly summarize.

Let $D$ be a divisor class on $X$. For vector bundle $V$ over $X$, we use the standard notation
\beqa
V(D):=V\otimes{\mathcal O}_X(D).
\eeqa
Now let $H$ be an ample divisor class on $X$, and suppose that two $H$-stable holomorphic vector bundles $V_1$ and $V_2$ over $X$ with $c_1( V_{\nu}) = 0$ are already given.
Put $r_{\nu} := {\rm rank}( V_{\nu})$, $r_{\nu}' := r_{\nu}/\gcd( r_1, r_2)$, $r' := r_1' + r_2'$ and $r := r_1 + r_2$. Then every extension
\begin{equation} \label{extF}
  0 \to V_1( r_2' D) \to V \to V_2( -r_1' D) \to 0
\end{equation}
satisfies ${\rm rank}(V) = r$ and $c_1(V) = 0$. In \cite{AH} it is shown that 
the vector bundle $V$ is $(H + \epsilon D)$-stable for each sufficiently small $\epsilon > 0$ if the following three conditions hold:

\begin{enumerate}
\renewcommand{\labelenumi}{\textbf{(A)}}
\renewcommand{\theenumi}{\textbf{(A)}}
\item \label{A}
    $D\cdot H^2=0$.
\renewcommand{\labelenumi}{\textbf{(B)}}
\renewcommand{\theenumi}{\textbf{(B)}}
\item \label{B}
    The extension (\ref{extF}) does not split.
\renewcommand{\labelenumi}{\textbf{(C)}}
\renewcommand{\theenumi}{\textbf{(C)}}
\item \label{C}
  $D\cdot H\not\equiv 0$ numerically.
\end{enumerate}
Note that this stability result applies to any Calabi-Yau threefold with $h^{1, 1}(X)>1$ \cite{AH}.

Assume \ref{A} holds then \ref{B} can be satisfied if the Euler characteristic $\chi_D( V_2, V_1)<0$
(cf. Lemma 3.3, \cite{AH}) with
\beqa
  \chi_D( V_2, V_1) := \sum_{i = 0}^3 (-1)^i \dim \Ext^i \big( V_2( -r_1' D), V_1( r_2' D) \big).
\eeqa
The Euler characteristic $\chi_D( V_2, V_1)$ is given by the formula
  \begin{equation} \label{eq:HRR}
    \chi_D( V_2, V_1) = \frac{r_1 r_2 r'^3}{6} D^3 + r' \big( \frac{r_1 r_2}{12} c_2( X) - r_2 c_2( V_1) - r_1 c_2( V_2) \big) \cdot D + \frac{r_2}{2} c_3( V_1) - \frac{r_1}{2} c_3( V_2).
  \end{equation}
Now suppose that $V_1$ and $V_2$ are $\tau_X$-invariant, which means that there are isomorphisms $\phi_{\nu}: V_{\nu} \to \tau_X^*( V_{\nu})$ of vector bundles over $X$. The automorphism $\tau_X^*( \phi_{\nu}) \circ \phi_{\nu}$
of $V_{\nu}$ is a scalar since $V_{\nu}$ is stable; multiplying $\phi_{\nu}$ by a square root of this scalar, we can achieve $\tau_X^*( \phi_{\nu}) \circ \phi_{\nu} = \id$ without loss of generality. 
If the divisor class $D$ is also $\tau_X$-invariant, then the same applies to the stable vector bundles $V_1( r_2' D)$ and $V_2( -r_1' D)$; this shows that we can lift $\tau_X$ to involutions on these vector bundles over $X$.
These induce an action of $\tau_X$ on the vector space $\Ext^1 \big( V_2( -r_1' D), V_1( r_2' D) \big)$, and hence an eigenspace decomposition
\beqa
  \Ext^1 \big( V_2( -r_1' D), V_1( r_2' D) \big) = \Ext^1_+ \big( V_2( -r_1' D), V_1( r_2' D) \big) \oplus \Ext^1_- \big( V_2( -r_1' D), V_1( r_2' D) \big).
\eeqa
Our assumption \ref{B} yields $\Ext^1_+ \neq 0$ or $\Ext^1_- \neq 0$. But replacing the lifted involution on one of the vector bundles by its negative exchanges $\Ext^1_+$ and $\Ext^1_-$.
Hence we can achieve $\Ext^1_+ \neq 0$ without loss of generality. Choosing the extension \eqref{extF} in such a way that its class is a nonzero element in $\Ext^1_+$,
it follows that the vector bundle $V$ is also $\tau_X$-invariant. This shows that under our assumptions, nontrivial $\tau_X$-invariant extensions \eqref{extF} always exist. The argument has been also used 
in the construction of \cite{Don1}.

We will now describe in broad outline our procedure to construct the stable $\tau_X$-invariant 
$SU(5)$ bundle. The construction requires three steps. First, we will construct stable $\tau_B$-invariant $SU(2)$ bundles $E_i$ on the base $B={\mathbb P}^1\times {\mathbb P}^1$. The pullback bundles $\pi^*E_i$ (which are shown to be stable) will then be used to construct a $\tau_X$-invariant $SU(4)$ bundle on $X$ applying the above construction. We then proceed and construct the required $SU(5)$ bundle again via the above method of bundle extension. 

Note that the reason for this three step construction is that if one would construct 
the rank five bundle in one step, say as an extension of a rank 2 and a rank 3 bundle, 
then the resulting second Chern classes do not satisfy the anomaly constraint on $X$.
This was already observed in \cite{AC1}. 

\section{Stable $\tau_B$-invariant rank two bundles}
In this section, we construct some stable vector bundles of rank $2$ over $B = \bbP^1 \times \bbP^1$. The method is inspired by the work of Brosius \cite{Bro} on vector bundles over ruled surfaces.
\begin{lemma}
  Let $E$ be a vector bundle of rank $2$ over $B$ with $c_1( E) = 0$ and $\rmH^0( E) = 0$. Suppose that the vector bundles $E|_{\{z_1\} \times \bbP^1}$ and $E|_{\bbP^1 \times \{z_2\}}$ over $\bbP^1$ are trivial
  for general points $z_1, z_2 \in \bbP^1$. Then $E$ is slope $H_B$-stable for every ample divisor class $H_B$ on $B$.
\end{lemma}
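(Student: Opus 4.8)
The plan is to argue by contradiction: suppose $E$ is not slope $H_B$-stable, so there is a destabilizing line bundle, i.e. a sub-line-bundle $L \hookrightarrow E$ with $\deg_{H_B} L \geq 0$ (since $c_1(E) = 0$ forces the slope threshold to be $0$, and a rank-$2$ bundle is stable iff every sub-line-bundle has negative $H_B$-degree). Writing $L = \calO_B(m_1, m_2)$, destabilization means $m_1 H_B \cdot f_2 + m_2 H_B \cdot f_1 \geq 0$ where $f_1, f_2$ are the two ruling classes; since this must hold for all ample $H_B$, the first thing to extract is sign information on $m_1, m_2$ — in particular one cannot have both $m_1 < 0$ and $m_2 < 0$, and the quotient $E/L$ has first Chern class $\calO_B(-m_1, -m_2)$.

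Next I would restrict to a general vertical fiber $\{z_1\} \times \bbP^1$ and a general horizontal fiber $\bbP^1 \times \{z_2\}$. By hypothesis $E$ restricts to the trivial bundle $\calO_{\bbP^1}^{\oplus 2}$ on each of these. The sub-line-bundle $L$ restricts to $\calO_{\bbP^1}(m_2)$ on a vertical fiber and $\calO_{\bbP^1}(m_1)$ on a horizontal fiber, and these must inject into (a possibly non-generic specialization of) $\calO_{\bbP^1}^{\oplus 2}$ — on the general fiber where $E$ is actually trivial, $\mathrm{Hom}(\calO_{\bbP^1}(m_i), \calO_{\bbP^1}^{\oplus 2}) = \rmH^0(\calO_{\bbP^1}(-m_i))^{\oplus 2}$, which vanishes if $m_i > 0$. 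So $m_1 \leq 0$ and $m_2 \leq 0$. Combined with the destabilization inequality $m_1 H_B \cdot f_2 + m_2 H_B \cdot f_1 \geq 0$ for all ample $H_B$, we are forced to $m_1 = m_2 = 0$, i.e. $L = \calO_B$. But then $\rmH^0(E) \supseteq \rmH^0(L) = \rmH^0(\calO_B) \neq 0$, contradicting the hypothesis $\rmH^0(E) = 0$.

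One technical point deserves care, and I expect it to be the main obstacle: the destabilizing subsheaf need only be a saturated rank-$1$ \emph{reflexive} subsheaf, which on a smooth surface is a line bundle, so that part is fine; but I must be sure that restricting $L \hookrightarrow E$ to a general fiber stays injective as a map of sheaves (not merely generically injective) — this holds because $L$ is a line bundle and the fiber can be chosen to avoid the finitely many points where the cokernel $E/L$ fails to be locally free, equivalently one chooses the fiber transverse to the support of the torsion of $E/L$. A clean way to handle this is to pass to the saturation so that $E/L$ is torsion-free, hence locally free outside a finite set, and then invoke that "for general $z_1$" and "for general $z_2$" the restricted fiber meets neither that finite set nor the locus where $E$ fails to be fiberwise trivial. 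With that in place the degree computation on $\bbP^1$ is immediate and the argument closes.

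A remark on robustness: the hypothesis that $E$ is fiberwise trivial for \emph{general} fibers (rather than all) is exactly what is needed and no more — we only ever restrict to general fibers — and the role of $\rmH^0(E) = 0$ is solely to rule out the split case $L = \calO_B$, which is the unique sub-line-bundle of degree $0$ in every ruling direction. So the structure of the proof is: (1) reduce stability to sub-line-bundles having negative degree against all ample classes; (2) use fiberwise triviality to bound $m_1, m_2 \leq 0$; (3) use the ampleness quantifier to force $m_1 = m_2 = 0$; (4) use $\rmH^0(E) = 0$ for the final contradiction.
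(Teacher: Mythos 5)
Your proposal is correct and follows essentially the same route as the paper's proof: saturate the destabilizing subsheaf so it becomes a line bundle $\calO_B(m_1,m_2)$, restrict to general fibers of both rulings (chosen to avoid the finitely many points where the quotient is not locally free) to get $m_1,m_2\leq 0$, and use $\rmH^0(E)=0$ to exclude $(m_1,m_2)=(0,0)$; the only difference is your contradiction framing versus the paper's direct estimate $H_B\cdot c_1(L)<0$, which is immaterial. (Your aside that the destabilization inequality "must hold for all ample $H_B$" is unnecessary and not what the contradiction hypothesis gives, but the argument already works for a single fixed ample class, so nothing is lost.)
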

\begin{proof}
  Let $L \subset E$ be a coherent subsheaf of rank $1$ such that $E/L$ is torsionfree.
  The induced map of biduals $L^{**} \to E^{**} \cong E$ embeds the torsion sheaf $L^{**}/L$ into the torsionfree sheaf $E/L$, so $L^{**}/L = 0$; this shows that $L$ is locally free, so $L \cong \calO_{\bbP^1 \times \bbP^1}( a, b)$.

  Let $z_1 \in \bbP^1$ be a general point such that $E/L$ is locally free near $\{z_1\} \times \bbP^1$. Then $L|_{\{z_1\} \times \bbP^1}$ is a subbundle of $E|_{\{z_1\} \times \bbP^1}$.
  The latter is by assumption trivial, hence in particular semistable, so $\deg( L|_{\{z_1\} \times \bbP^1}) \leq 0$ follows; this means $b \leq 0$.

  Restricting also to $\bbP^1 \times \{z_2\}$ for a general point $z_2 \in \bbP^1$, the analogous argument shows $a \leq 0$. Moreover, $L$ is nontrivial because $E$ has non nonzero global section, so $(a, b) \neq (0, 0)$.
  Hence $H_B \cdot c_1( L) = sb + ta < 0$ for every ample divisor class $H_B = (s, t)$ with $s, t > 0$ on $B = \bbP^1 \times \bbP^1$.
\end{proof}
\begin{corollary}
  Let integers $k_0, k_{\infty} \geq 1$ and epimorphisms
  \beqa
    p_0: \calO_{\bbP^1}^2 \twoheadrightarrow \calO_{\bbP^1}( k_0) \qquad\text{and}\qquad p_{\infty}: \calO_{\bbP^1}^2 \twoheadrightarrow \calO_{\bbP^1}( k_{\infty})
  \eeqa
  be given, with $\ker( p_0) \neq \ker( p_{\infty})$. Let $E$ be the kernel of the composition
  \begin{equation} \label{eq:hecke}
    \calO_{\bbP^1 \times \bbP^1}( 1, 0)^2 \xrightarrow{(q \times \id)^2} \calO_{\{0, \infty\} \times \bbP^1}^2
      \xrightarrow{p_0 \oplus p_{\infty}} \calO_{\{0\} \times \bbP^1}( k_0) \oplus \calO_{\{\infty\} \times \bbP^1}( k_{\infty})   
  \end{equation}
  where the first map is induced by an epimorphism $q: \calO_{\bbP^1}( 1) \twoheadrightarrow \calO_{\{0, \infty\}}$ of coherent sheaves on $\bbP^1$.
  Then $E$ is a vector bundle of rank $2$ over $B = \bbP^1 \times \bbP^1$ with $c_1( E) = 0$ and $c_2( E) = k_0 + k_{\infty}$, which is slope $H_B$-stable for every ample divisor class $H_B$ on $B$.
\end{corollary}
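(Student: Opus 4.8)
The plan is to realize $E$ as the kernel of the composition in \eqref{eq:hecke} and then to check that $E$ meets the three hypotheses of the Lemma above. First I would note that this composition is surjective: each of its two factors is, namely $(q \times \id)^2$ because it is a pullback of the epimorphism $q$ in each summand, and $p_0 \oplus p_{\infty}$ because it is a direct sum of the epimorphisms $p_0$ and $p_{\infty}$. Writing $F := \calO_{\{0\} \times \bbP^1}( k_0) \oplus \calO_{\{\infty\} \times \bbP^1}( k_{\infty})$, we thus obtain a short exact sequence
\begin{equation*}
  0 \to E \to \calO_{\bbP^1 \times \bbP^1}( 1, 0)^2 \to F \to 0 .
\end{equation*}
Each summand of $F$ is supported on a smooth divisor of the smooth surface $B$ and carries there the structure of a line bundle, so it has a two-term locally free resolution over $\calO_B$ --- explicitly $0 \to \calO_B( -1, k_0) \to \calO_B( 0, k_0) \to \calO_{\{0\} \times \bbP^1}( k_0) \to 0$ and its analogue at $\infty$; hence $F$ has projective dimension one, and therefore the kernel $E$ of a surjection from the locally free sheaf $\calO( 1, 0)^2$ onto $F$ is itself locally free. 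Its Chern character is read off from the sequence above together with these resolutions: writing $f_1, f_2$ for the classes of the two rulings (so $f_1^2 = f_2^2 = 0$) one obtains $\mathrm{ch}( E) = \mathrm{ch}\big( \calO( 1, 0)^2\big) - \mathrm{ch}( F) = 2 - ( k_0 + k_{\infty}) [\mathrm{pt}]$, hence $E$ has rank $2$, $c_1( E) = 0$ and $c_2( E) = k_0 + k_{\infty}$.

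It remains to verify the hypotheses of the Lemma. The equality $c_1( E) = 0$ is already established. For $\rmH^0( E) = 0$: a global section of $\calO( 1, 0)^2$ is a pair $(f, g)$ of sections of $\calO_{\bbP^1}( 1)$ pulled back from the first factor, and it lies in $E$ exactly when $p_0$ kills its restriction to $\{0\} \times \bbP^1$ and $p_{\infty}$ kills its restriction to $\{\infty\} \times \bbP^1$. Since $\ker( p_0) \cong \calO_{\bbP^1}( -k_0)$ and $\ker( p_{\infty}) \cong \calO_{\bbP^1}( -k_{\infty})$ have no global sections (as $k_0, k_{\infty} \geq 1$), the maps $\rmH^0( p_0)$ and $\rmH^0( p_{\infty})$ are injective, so $(f, g)$ must vanish along $\{0\} \times \bbP^1$ and along $\{\infty\} \times \bbP^1$; a section of $\calO_{\bbP^1}( 1)$ vanishing at two distinct points is zero, so $f = g = 0$.

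For the restriction hypotheses: if $z_1 \neq 0, \infty$ then $F$ and $\mathrm{Tor}_1( F, \calO_{\{z_1\} \times \bbP^1})$ vanish along $\{z_1\} \times \bbP^1$, so restricting the displayed sequence gives $E|_{\{z_1\} \times \bbP^1} \cong \calO( 1, 0)^2|_{\{z_1\} \times \bbP^1} \cong \calO_{\bbP^1}^2$, which is trivial. The restriction to $\bbP^1 \times \{z_2\}$ is the delicate case, since the Euler characteristic does not by itself exclude $\calO_{\bbP^1}( 1) \oplus \calO_{\bbP^1}( -1)$. Here I would first twist the displayed sequence by $\calO( -1, 0)$; this replaces $\calO( 1, 0)^2$ by $\calO_B^2$, leaves $F$ unchanged (each summand being supported where $\calO( -1, 0)$ restricts trivially), and turns the map into the restriction $\calO_B^2 \to \calO_{\{0, \infty\} \times \bbP^1}^2$ followed by $p_0 \oplus p_{\infty}$. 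Restricting this twisted sequence to $\bbP^1 \times \{z_2\}$ (again $\mathrm{Tor}_1$ vanishes, the divisors meeting transversally) yields
\begin{equation*}
  0 \to \big( E|_{\bbP^1 \times \{z_2\}}\big)( -1) \to \calO_{\bbP^1}^2 \to \bbC_{(0, z_2)} \oplus \bbC_{(\infty, z_2)} \to 0 ,
\end{equation*}
and on $\rmH^0$ the rightmost map is a linear map $\bbC^2 \to \bbC^2$ whose kernel is the intersection of the fibres over $z_2$ of the sub-line-bundles $\ker( p_0), \ker( p_{\infty}) \subseteq \calO_{\bbP^1}^2$. As $\ker( p_0) \neq \ker( p_{\infty})$, these two sub-line-bundles agree at only finitely many points of $\bbP^1$, so for general $z_2$ this map is an isomorphism on $\rmH^0$, and hence $\rmH^0\big( ( E|_{\bbP^1 \times \{z_2\}})( -1)\big) = 0$. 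Since $E|_{\bbP^1 \times \{z_2\}}$ is a rank-two bundle of degree $0$ on $\bbP^1$, it is $\calO( a) \oplus \calO( -a)$ for some $a \geq 0$, and the vanishing of that cohomology group forces $a = 0$, i.e. triviality. With this all hypotheses of the Lemma are met, and it follows that $E$ is slope $H_B$-stable for every ample divisor class $H_B$ on $B$.

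The step I expect to be the main obstacle is precisely this last triviality of $E|_{\bbP^1 \times \{z_2\}}$: the numerical invariants leave $\calO_{\bbP^1}^2$ and $\calO_{\bbP^1}( 1) \oplus \calO_{\bbP^1}( -1)$ indistinguishable, so one really has to use the hypothesis $\ker( p_0) \neq \ker( p_{\infty})$, and the twist by $\calO( -1, 0)$ is the device that brings it into play; the remaining verifications are routine bookkeeping with the two elementary modifications of $\calO( 1, 0)^2$ along the fibres $\{0\} \times \bbP^1$ and $\{\infty\} \times \bbP^1$.
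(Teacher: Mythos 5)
Your proof is correct, and its skeleton is the one the paper uses: verify the three hypotheses of the preceding Lemma, with the hypothesis $\ker(p_0)\neq\ker(p_\infty)$ entering exactly where it must, namely in showing that the fibres of these two line subbundles of $\calO_{\bbP^1}^2$ are distinct lines in $\bbC^2$ for general $z_2$. The technical devices differ at two points, though. For local freeness of $E$ the paper argues via flatness: the sheaves in \eqref{eq:hecke} are flat over the second $\bbP^1$ factor, the restrictions of $E$ to the horizontal rulings are torsionfree (being subsheaves of $\calO(1,0)^2$) and hence flat, and the local criterion for flatness then makes $E$ a rank-$2$ bundle; you instead use that $F$ has a two-term locally free resolution (projective dimension one), so the kernel of a surjection from a locally free sheaf onto $F$ is locally free -- both are standard and correct, and your resolution also gives the Chern classes by a Chern character computation where the paper simply reads them off the kernel sequence. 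For the delicate horizontal restriction, the paper identifies $E|_{\bbP^1\times\{z_2\}}$ with the kernel of \eqref{eq:hecke_y} and, using $\bbC^2=\ker(p_{0,z_2})\oplus\ker(p_{\infty,z_2})$ for general $z_2$, splits that elementary modification into two epimorphisms $\calO_{\bbP^1}(1)\to\calO_{\{0\}}$ and $\calO_{\bbP^1}(1)\to\calO_{\{\infty\}}$, each with kernel $\calO_{\bbP^1}$, so the restriction is $\calO_{\bbP^1}^2$ on the nose; you instead twist by $\calO(-1,0)$, show $\rmH^0\bigl((E|_{\bbP^1\times\{z_2\}})(-1)\bigr)=0$ from the same distinctness of the two kernel fibres, and then invoke Grothendieck's splitting plus degree $0$ to force triviality. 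Your cohomological route is slightly less explicit (it needs the Grothendieck classification to exclude $\calO(a)\oplus\calO(-a)$ with $a>0$) but equally valid, and correctly flags, as the paper's argument does implicitly, that the vanishing of $\mathrm{Tor}_1$ along the restricted fibre (transversality, respectively flatness over the second factor) is what legitimises restricting the kernel sequence.
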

\begin{proof}
  The sheaves in \eqref{eq:hecke} are coherent over $B = \bbP^1 \times \bbP^1$ and flat over the second factor.
  Hence $E$ is also coherent over $B$, flat over the second factor, and its restriction to $\bbP^1 \times \{z_2\}$ is a subsheaf of $\calO_{\bbP^1 \times \{z_2\}}( 1)^2$ for every point $z_2 \in \bbP^1$.
  In particular, these restrictions of $E$ are torsionfree, and hence flat, over $\bbP^1$. Now the local criterion for flatness implies that $E$ is flat over $B$, and hence a vector bundle of rank $2$.

  As $E$ is the kernel of the epimorphism \eqref{eq:hecke}, it has Chern classes $c_1( E) = 0$ and $c_2( E) = k_0 + k_{\infty}$.
  To prove stability, we check that $E$ satisfies the hypotheses of the previous lemma.

  The epimorphism \eqref{eq:hecke} induces a monomorphism on global sections, since $\rmH^0( q \times \id)$ is an isomorphism,
  and $\rmH^0( p_0)$, $\rmH^0( p_{\infty})$ are both injective due to the assumption $k_0, k_{\infty} \geq 1$. Consequently, the kernel $E$ has no global sections.

  Given any point $z_1 \in \bbP^1 \setminus \{0, \infty\}$, the restriction of $E$ to $\{z_1\} \times \bbP^1$ coincides with the restriction of $\calO_{\bbP^1 \times \bbP^1}( 1, 0)^2$ to $\{z_1\} \times \bbP^1$;
  the latter is a trivial vector bundle.

  Now let a point $z_2 \in \bbP^1$ be given. Trivialising the line bundles $\calO_{\bbP^1}( k_0)$ and $\calO_{\bbP^1}( k_{\infty})$ in $z_2$,
  the restriction of $E$ to $\bbP^1 \times \{z_2\}$ becomes the kernel of the composition
  \begin{equation} \label{eq:hecke_y}
    \calO_{\bbP^1}( 1)^2 \xrightarrow{q \oplus q} \calO_{\{0, \infty\}}^2 \xrightarrow{p_{0, z_2} \oplus p_{\infty, z_2}} \calO_{\{0\}} \oplus \calO_{\{\infty\}}.
  \end{equation}
  By assumption, $\ker( p_0) \neq \ker( p_{\infty})$ as line subbundles in $\calO_{\bbP^1}^2$. Hence $\ker( p_{0, z_2}) \neq \ker( p_{\infty, z_2})$ as lines in $\bbC^2$ if the point $z_2 \in \bbP^1$ is general, 
  and thus $\bbC^2 = \ker( p_{0, z_2}) \oplus \ker( p_{\infty, z_2})$; consequently, the epimorphism \eqref{eq:hecke_y} decomposes into the direct sum of two epimorphisms
  \begin{equation}
    \calO_{\bbP^1}( 1) \longto \calO_{\{0\}} \qquad\text{and}\qquad \calO_{\bbP^1}( 1) \longto \calO_{\{\infty\}}.
  \end{equation}
  The kernels of these epimorphisms are isomorphic to $\calO_{\bbP^1}$. This shows that the kernel of \eqref{eq:hecke_y},
  and hence the restriction of $E$ to $\bbP^1 \times \{z_2\}$, is isomorphic to $\calO_{\bbP^1}^2$ if $z_2$ is general.
\end{proof}
Now let $\tau$ denote the involution $z \mapsto -z$ of $\bbP^1 = \bbC \cup \{ \infty\}$, and let $\tau_B = \tau \times \tau$ denote the induced involution of $B = \bbP^1 \times \bbP^1$.
\begin{corollary}
  For every integer $k \geq 2$, there exists a $\tau_B$-invariant vector bundle $E$ of rank $2$ over $B = \bbP^1 \times \bbP^1$ with $c_1( E) = 0$ and $c_2( E) = k$
  such that $E$ is slope $H_B$-stable for every ample divisor class $H_B$ on $B$.
\end{corollary}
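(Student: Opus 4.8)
The plan is to revisit the construction of the previous corollary and carry it out equivariantly for the involution $\tau_B$, so that the bundle $E$ --- already shown there to be a rank-two vector bundle with $c_1 = 0$, $c_2 = k_0 + k_\infty$ and slope $H_B$-stable for every ample $H_B$ --- acquires a lift of $\tau_B$. First I would pick integers $k_0, k_\infty \geq 1$ with $k_0 + k_\infty = k$ (possible since $k \geq 2$, e.g.\ $k_0 = 1$ and $k_\infty = k - 1$), so that only the equivariance has to be added.

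The geometric point that makes this work is that $\tau$ \emph{fixes} both points $0, \infty \in \bbP^1$; hence $\tau_B = \tau \times \tau$ preserves each of the divisors $\{0\} \times \bbP^1$, $\{\infty\} \times \bbP^1$ and $\{0,\infty\} \times \bbP^1$ of $B$, acting on them through $\id \times \tau$. I would then fix $\bbZ_2$-linearisations of $\calO_{\bbP^1}(1)$ on the first factor and of $\calO_{\bbP^1}(k_0), \calO_{\bbP^1}(k_\infty)$ on the second factor, and linearise $\calO_{\bbP^1 \times \bbP^1}(1,0)^2$ as $\calO_{\bbP^1 \times \bbP^1}(1,0)$ tensored with a suitable $2$-dimensional representation $W$ of $\bbZ_2$; this turns every sheaf in \eqref{eq:hecke} into a $\tau_B$-equivariant sheaf, and the first arrow $(q \times \id)^2$, being the canonical restriction map, is automatically $\tau_B$-equivariant. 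The remaining task is to choose $p_0$ and $p_\infty$ equivariantly while keeping $\ker(p_0) \neq \ker(p_\infty)$: over the invariant divisor $\{0\} \times \bbP^1$ (resp.\ $\{\infty\} \times \bbP^1$) this amounts to a $\bbZ_2$-equivariant epimorphism $\calO_{\bbP^1}^2 \twoheadrightarrow \calO_{\bbP^1}(k_0)$ (resp.\ $\calO_{\bbP^1}(k_\infty)$), and such epimorphisms exist because, for an appropriate linearisation of the source, $\rmH^0(\calO_{\bbP^1}(k_0))$ contains a base-point-free $\tau$-invariant pencil (spanned by $y_0^{k_0}, y_1^{k_0}$ if $k_0$ is odd, and by $y_0^{k_0} + y_1^{k_0}, y_0^{k_0 - 1} y_1$ if $k_0$ is even, in homogeneous coordinates with $\tau \colon [y_0:y_1] \mapsto [-y_0:y_1]$; similarly for $k_\infty$), and these equivariant epimorphisms move in a positive-dimensional family; so $\ker(p_0) \neq \ker(p_\infty)$ can be imposed as well --- and it is automatic once $k_0 \neq k_\infty$, since then $\ker(p_0) \cong \calO_{\bbP^1}(-k_0)$ and $\ker(p_\infty) \cong \calO_{\bbP^1}(-k_\infty)$ are non-isomorphic. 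With such a choice $E$ is the kernel of a morphism of $\tau_B$-equivariant sheaves, hence $\tau_B$-equivariant, in particular $\tau_B$-invariant, and the previous corollary supplies all the remaining assertions about the underlying sheaf.

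The step I expect to need genuine care --- the main obstacle --- is the matching of the $\bbZ_2$-linearisations over the two components $\{0\} \times \bbP^1$ and $\{\infty\} \times \bbP^1$: since $\tau$ acts by $+1$ on the fibre of $\calO_{\bbP^1}(1)$ at $0$ but by $-1$ on its fibre at $\infty$, the linearisation induced on $\calO_{\bbP^1 \times \bbP^1}(1,0)^2|_{\{0,\infty\} \times \bbP^1}$ differs by the sign character $\chi$ of $\bbZ_2$ between the two components, so $W$ must be chosen with $\chi \otimes W \cong W$ (e.g.\ $W = \chi \oplus \mathbf{1}$, with $\mathbf{1}$ the trivial character) in order that $p_0$ and $p_\infty$ both be realisable as equivariant epimorphisms onto a common linearised line bundle --- and one checks that the invariant base-point-free pencils above indeed have the isotypic type $\chi \oplus \mathbf{1}$ required for this. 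Once this bookkeeping of linearisations is set up consistently, nothing further is needed.
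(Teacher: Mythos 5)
Your proposal is correct and takes essentially the same route as the paper: there, too, the kernel construction of the previous corollary is made $\tau_B$-equivariant by giving the two copies of $\calO_{\bbP^1 \times \bbP^1}(1,0)$ opposite linearisations (your $W \cong \mathbf{1} \oplus \chi$) and defining $p_0, p_\infty$ via base-point-free pairs of $\pm$-eigensections of $\calO_{\bbP^1}(k_0)$ and $\calO_{\bbP^1}(k_\infty)$, with the two components swapped between $0$ and $\infty$ so that $\ker(p_0) \neq \ker(p_\infty)$ holds even when $k_0 = k_\infty$. The explicit invariant pencils you exhibit coincide with the paper's eigensections $s_n^{\pm}$.
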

\begin{proof}
  Choose a decomposition $k = k_0 + k_{\infty}$ with $k_0, k_{\infty} \geq 1$.

  We lift $\tau$ to an involution $\tau_1$ on the line bundle $\calO_{\bbP^1}( 1)$ over $\bbP^1$ such that
  $\tau_1 = \id$ in the fiber over the fixed point $0 \in \bbP^1$, and $\tau_1 = -\id$ in the fiber over $\infty \in \bbP^1$.
  In terms of the homogenous coordinates $[w:z]$ on $\bbP^1$ with $[0:1] = \infty$ and $[1:0] = 0$, the corresponding global sections $w$ and $z$ of $\calO_{\bbP^1}( 1)$ satisfy $\tau_1( w) = w$ and $\tau_1( z) = -z$.
  This involution $\tau_1$ induces an involution $\tau_n := \tau_1^{\otimes n}$ on the line bundle $\calO_{\bbP^1}( n)$ over $\bbP^1$ for all $n \in \bbZ$.

  If $n \geq 1$, then $\calO_{\bbP^1}( n)$ can be globally generated by two eigensections $s_n^+, s_n^- \in \rmH^0( \calO_{\bbP^1}(n))$ with $\tau_n( s_n^+) = s_n^+$ and $\tau_n( s_n^-) = -s_n^-$.
  (For example, one can take
  \begin{align*}
    s_{2n-1}^+ & := w^{2n-1}, & s_{2n}^+ & := w^{2n} + z^{2n}\\
    s_{2n-1}^- & := z^{2n-1}, & s_{2n}^- & := w^{2n-1} \cdot z
  \end{align*}
  for $n \geq 1$.) Such sections define epimorphisms
  \beqa
    p_0 := s_{k_0}^+ \oplus s_{k_0}^-: \calO_{\bbP^1}^2 \twoheadrightarrow \calO_{\bbP^1}( k_0) \qquad\text{and}\qquad
    p_{\infty} := s_{k_{\infty}}^- \oplus s_{k_{\infty}}^+: \calO_{\bbP^1}^2 \twoheadrightarrow \calO_{\bbP^1}( k_{\infty})
  \eeqa
  which have nonisomorphic images for $k_0 \neq k_{\infty}$, and are not proportional for $k_0 = k_{\infty}$; hence $\ker( p_0) \neq \ker( p_{\infty})$ in any case.

  The involution $\tau_1$ on $\calO_{\bbP^1}( 1)$ also induces by pullback an involution $\tau_{1, 0}$ on the line bundle $\calO_{\bbP^1 \times \bbP^1}( 1, 0)$ over $B$.
  With our choices for $p_0$ and $p_{\infty}$, the composition \eqref{eq:hecke} is equivariant with respect to the involutions $-\tau_{1, 0} \oplus \tau_{1, 0}$ and $\tau_{k_0} \oplus \tau_{k_{\infty}}$.
  Hence its kernel is $\tau_B$-invariant; due to the previous corollary, this kernel is a stable vector bundle $E$ of rank $2$ over $B$ with the required Chern classes.
\end{proof}

\section{Stable $\tau_X$-invariant rank four bundles}

Let $E_1$ and $E_2$ be two $H_B$-stable $\tau_B$-invariant rank two bundles 
on $B$ with $c_1(E_i)=0$ and $c_2(E_i)=k_i\geq 2$. We first note 
\begin{lemma}
$\pi^*E_i$ is stable on $X$ with respect to $H=z\Sigma+\pi^*H_B$ with $H_B=hc_1$
and $0<z<h$.
\end{lemma}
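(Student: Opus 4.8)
The plan is to deduce the stability of $\pi^{*}E_{i}$ on $X$ from the stability of $E_{i}$ on $B$ by restricting a hypothetical destabilizing subsheaf to the two sections $\sigma_{1}(B),\sigma_{2}(B)$ and to a general fibre, and then reading off the sign of the relevant intersection number from the explicit form of $H^{2}$.

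First I reduce to sub-line-bundles. Since $\pi^{*}E_{i}$ has rank $2$ with $c_{1}=0$, any destabilizing subsheaf has rank $1$; passing to its saturation in $\pi^{*}E_{i}$ does not decrease the slope, and the saturation --- being the kernel of a morphism from the locally free sheaf $\pi^{*}E_{i}$ onto a torsionfree sheaf --- is reflexive, hence on the smooth threefold $X$ a line bundle $\mathcal{L}\hookrightarrow\pi^{*}E_{i}$ with torsionfree cokernel. So it suffices to show $c_{1}(\mathcal{L})\cdot H^{2}<0$ for every such $\mathcal{L}$. Writing $N:=\pi^{*}c_{1}$ and using the adjunction relation $\Sigma^{2}=-\pi^{*}c_{1}\cdot\Sigma$ one computes
\[
  H^{2}=(z\Sigma+hN)^{2}=z(2h-z)\,N\cdot\Sigma+h^{2}\,N^{2},
\]
where $z(2h-z)>0$ because $0<z<h$, and $N^{2}=\pi^{*}(c_{1}^{2})$ equals $8$ times the class of a fibre. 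Hence it is enough to prove $c_{1}(\mathcal{L})\cdot N^{2}\le 0$ and $c_{1}(\mathcal{L})\cdot N\cdot\Sigma<0$.

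The fibre contribution is the easy one: restricting $\mathcal{L}\hookrightarrow\pi^{*}E_{i}$ to the fibre $X_{b}$ over a general point $b\in B$ stays injective (the cokernel is torsionfree, hence flat over the generic point of $B$), and $(\pi^{*}E_{i})|_{X_{b}}\cong\calO_{X_{b}}^{2}$; a line subbundle of $\calO_{X_{b}}^{2}$ on the elliptic curve $X_{b}$ has degree $\le 0$, so $c_{1}(\mathcal{L})\cdot N^{2}=8\deg(\mathcal{L}|_{X_{b}})\le 0$. For the section contribution, the key point is that $\mathcal{L}\hookrightarrow\pi^{*}E_{i}$ cannot vanish identically along the divisor $\sigma_{\nu}(B)$ --- otherwise its cokernel would acquire torsion there --- so the restriction $\mathcal{L}|_{\sigma_{\nu}(B)}\to(\pi^{*}E_{i})|_{\sigma_{\nu}(B)}$ is again injective. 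Under the identification $\pi\circ\sigma_{\nu}=\id_{B}$ one has $(\pi^{*}E_{i})|_{\sigma_{\nu}(B)}\cong E_{i}$ and $N|_{\sigma_{\nu}(B)}=c_{1}$, so $\mathcal{L}|_{\sigma_{\nu}(B)}$ is a rank-one subsheaf of $E_{i}$. Since $E_{i}$ is $c_{1}$-stable of slope $0$, the saturation of $\mathcal{L}|_{\sigma_{\nu}(B)}$ inside $E_{i}$ has strictly negative $c_{1}$-degree, and $\mathcal{L}|_{\sigma_{\nu}(B)}$ differs from this saturation by an effective class; therefore $c_{1}(\mathcal{L})\cdot\sigma_{\nu}\cdot N=\deg_{c_{1}}(\mathcal{L}|_{\sigma_{\nu}(B)})<0$. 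Summing over $\nu=1,2$ gives $c_{1}(\mathcal{L})\cdot N\cdot\Sigma<0$, and combining the two estimates yields $c_{1}(\mathcal{L})\cdot H^{2}<0$, as wanted.

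The only place I expect to have to be careful is keeping the various restriction maps injective, i.e.\ controlling the torsion that can appear when a rank-one subsheaf is restricted to a section or to a fibre; this is exactly what forces the passage to the saturation at the outset. The hypothesis $0<z<h$ is used twice: to ensure $z(2h-z)>0$, and --- via the usual ampleness criterion for elliptic fibrations --- to know that $H=z\Sigma+h\pi^{*}c_{1}$ is ample on $X$ in the first place.
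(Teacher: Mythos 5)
Your argument is correct, and it is the standard restriction argument for pullback bundles on elliptic fibrations; note that the paper itself gives no proof of this lemma but simply cites \cite{AC1}, Section 4.1, where essentially this kind of argument (restricting a saturated destabilizing line subsheaf to the generic fibre and to the sections, using triviality on the fibre and stability of $E_i$ on the base) is carried out. The individual steps you flag as delicate are all fine: saturation makes the subsheaf a line bundle on the smooth threefold, non-vanishing along $\sigma_\nu(B)$ follows from torsion-freeness of the quotient, and the two estimates combine with $H^2 = z(2h-z)\,\pi^*c_1\cdot\Sigma + h^2\pi^*(c_1^2)$ to give strict negativity. One small observation: your slope computation only uses $0<z<2h$; the stronger hypothesis $z<h$ enters only through the requirement that $H$ lie in the K\"ahler cone, exactly as you remark at the end.
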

The proof is given in (\cite{AC1}, Section 4.1).

Let $D$ be an invariant divisor on $X$. We consider the extension defining the rank 4 bundle with $D=x\Sigma+\pi^*\alpha$
\beqa\label{ext}
0\to \pi^*E_1(D)\to V_4\to\pi^*E_2(-D)\to 0.
\eeqa
Following the above construction we find that $V_4$ is stable with respect 
to $H_{\epsilon}=H+\epsilon D$ if the conditions \ref{A}, \ref{B}, \ref{C} are satisfied.

To solve \ref{A} we first note that we have either to choose $x<0$ and $\alpha c_1>0$ or 
$x>0$ and $\alpha c_1<0$ (or $x=0$ and $\alpha c_1=0$). We find 
\beqa\label{condA}
\ref{A}\colon \ \ \frac{h^2}{(h-z)^2}=1-\frac{xc_1^2}{\alpha c_1}
\eeqa
Thus given $x, \alpha$ and $c_1$ we can always solve for $h,z$.  

The next step is to compute $\chi_D(E_2,E_1)$ and solve $\chi_D(E_2,E_1)<0$.
If we insert the expressions for $D$ and the Chern classes of $\pi^*E_i$ into (\ref{eq:HRR})
we find the condition
\beqa
\ref{B}\colon \ \ (8x^3-x)c_1^2+6(1-4x^2)\alpha c_1+x(24\alpha^2+c_2)-6x(k_1+k_2)<0.
\eeqa
Finally, we have to assure that condition \ref{C} is satisfied
\beqa 
\ref{C}\colon\ \ \Sigma\cdot(x(h-z)c_1+z\alpha)+h \alpha\cdot c_1\not\equiv 0.
\eeqa
Since we assume $\alpha c_1\neq 0$ this condition is solved.

We also note that the Chern classes of $V_4$ are given by 
\begin{align}
c_1(V_4)&=0,\\
c_2(V_4)&=-2x\pi^*(2\alpha-xc_1)\cdot \Sigma+(k_1+k_2-2\alpha^2)[F],\\
c_3(V_4)&=4x(k_2-k_1).
\end{align}
where $[F]$ denotes the curve class of the fiber of $X$.

Now any simultaneous solution to \ref{A}, \ref{B} and \ref{C} gives a $\tau_X$-invariant 
$H_\epsilon$-stable rank four bundle which will be the starting point for the rank five 
bundle construction in the next section.

\section{Stable $\tau_X$-invariant rank five bundles}

We will now construct an invariant stable rank 5 bundle applying again the above construction.

Let $\D$ be another invariant divisor in $X$. We consider the extension defining the rank five bundle
with $\D=y\Sigma+\pi^*\beta$ and the given rank four bundle $V_4$ 
\beqa\label{ext2}
0\to V_4(\D)\to V_5 \to {\mathcal O}_X(-4\D)\to 0.
\eeqa
As above $V_5$ will be stable with respect to $H_{\epsilon, \bar\epsilon}=H_\epsilon+\bar\epsilon \D$
if the conditions \ref{A}, \ref{B} and \ref{C} are satisfied. We find 
\begin{align}
\ref{A}\colon\ \ 0=yh^2c_1^2+z(2h-z)(\beta-yc_1)c_1&+2\epsilon\Big[z\alpha\beta+(h-z)\big(x(\beta-yc_1)c_1+y\alpha c_1\big)\Big]\nonumber\\
&+ \epsilon^2\Big[2x\alpha\beta+y\alpha^2-x\big(x(\beta-yc_1)c_1+2xy\alpha c_1\big)\Big]\nonumber
\end{align}
Now $z,h$ have been fixed in the construction of $V_4$. Also we can not solve for $\epsilon$ as this
is chosen sufficiently small thus to solve this constraint we have to make an assumption on
the intersection numbers between $\alpha$, $\beta$ and $c_1$ and also $y$. We find condition 
\ref{A} is solved if we take
\beqa \label{ansatz}
y=0, \ \ \alpha \beta=0, \ \ \beta c_1=0.
\eeqa
Below we will see that the choice $y=0$ will actually be enforced by the solvability of the anomaly 
constraint. With (\ref{ansatz}) the nonsplit condition \ref{B} simplifies and we get the condition
\beqa
\ref{B}\colon\ \ x(k_2-k_1)<0
\eeqa
The last condition we need to solve is 
\beqa
\ref{C}\colon\ \ \Sigma \beta (z+\epsilon x)\not\equiv 0.
\eeqa
As we assume $\beta\neq 0$ the condition is solved.

The Chern classes of the rank five bundle $V_5$ are given by
\begin{align}
c_1(V_5)&=0,\\
c_2(V_5)&=c_2(V_4)-10\D^2,\\
c_3(V_5)&=c_3(V_4)-20\D^3-2c_2(V_4)\cdot \D.
\end{align}

\section{Solutions with $[W]\neq 0$}

In this section we assume that a stable rank 5 bundle $V_5$ in the visible sector is specified 
and take the trivial bundle as hidden sector bundle and so leave the hidden $E_8$ gauge group 
unbroken. The second Chern class constraint will be solved in this section by allowing 
a number of background fivebranes in the heterotic vacuum corresponding to the wrapping
of an effective curve class $[W]$.

Given a stable rank 5 bundle $V_5$ we have to satisfy the heterotic anomaly equation
$c_2(X)-c_2(V_5)=[W]=\pi^*(w_B)\Sigma+a_f [F]$ where $W$ is a space-time filling fivebrane
wrapping a holomorphic curve of $X$. This leads to the condition that $[W]$ is an effective curve class
in $X$ which can be expressed by the two conditions: $w_B$ an effective curve class in $B$ or zero and $a_f\geq 0$. If we insert the expressions for $c_2(X)$ and $c_2(V_5)$ we get
\begin{align}
w_B&=(6-2x^2-10y^2)c_1+4x\alpha+10y\beta\geq 0,\\ 
a_f&=44-k_1-k_2+2\alpha^2+10\beta^2\geq 0.
\end{align}
From the construction of $V_4$ and $V_5$ we find that the curve classes $x\alpha$ and $y\beta$
are both negative effective or have at least one negative entry. We thus conclude that the only possibility 
to solve $w_B\geq 0$ is to take $x=\pm 1$ and $y= 0$. Moreover, for $x=\pm 1$ we get $w_B\geq 0$
for $c_1\pm \alpha\geq 0$ which constrains possible $\alpha$ classes. 

We write divisor classes on $B$ in the form $(p,q)$ where the entries refer to the two generators in $B={\mathbb P}^1\times{\mathbb P}^1$.
From $w_B\geq 0$ and the condition (\ref{condA}) we find an $\alpha$ class has to satisfy for 
\begin{align}\label{aclass}
x=1&\colon \ \ c_1+\alpha\geq 0, \ \  \alpha\cdot c_1<0,\\
x=-1&\colon \ \ c_1-\alpha\geq 0, \ \ \alpha\cdot c_1>0,\nonumber
\end{align}
which leads to a list of possible $\alpha$ classes. However, before working out this list we 
note that the possible $\beta$ classes are constraint by the condition $a_f\geq 0$ and $\beta c_1=0$. 
This leads to the two possible classes
\beqa\label{bclass}
\beta=(-1,1)\ \ {\rm and}\ \ (1,-1).
\eeqa
Now in summary an $\alpha$ class has to satisfy (\ref{aclass}) together with $\alpha\beta=0$ where $\beta$ is given by (\ref{bclass}). This leads to the list of possible $\alpha$ classes depending on $x$
\begin{align}\label{aclasssol}
x=1&\colon\ \  \alpha=(-2,-2), (-1,-1),\\
x=-1&\colon\ \ \alpha=(2,2), (1,1).\nonumber
\end{align}
In summary, we are left to solve the following system of conditions with $\alpha$, $\beta$ and
$x$ as given in (\ref{bclass}) and (\ref{aclasssol})
\begin{align}
(8x^3-x)c_1^2+6(1-4x^2)\alpha c_1+x(24\alpha^2+c_2)-6x(k_1+k_2)&<0,\\
x(k_2-k_1)&<0,\\
a_f=24-k_1-k_2+2\alpha^2&\geq 0,\\
c_3(V_5)/2=2x(k_2-k_1)&=\pm 6,
\end{align}
where the first two inequalities are the non-split conditions for the defining extensions of 
$V_4$ and $V_5$, respectively. We also recall that $k_i\geq 2$.

We find the following solutions:
\begin{align}
x=-1&\colon \ \  \alpha=(2,2),\ \ k_1=2+i,\ \ k_2=5+i\ \ (i=0,\dots,16),\\
 &\ \ \ \ \alpha=(-1,-1), \ \ k_1=2+i,\ \ k_2=5+i\ \ (i=0,\dots,10).
\end{align}

\section{Solutions with $[W]=0$}

The solutions with $\alpha=c_1$ led to $w_B=0$ but with $a_f>0$. 
In this section we will now specify a hidden sector bundle $V_{hid}$ with second Chern class $c_2(V_{hid})=k_3$ and solve the anomaly equation without fivebranes, i.e., $[W]=0$. More precisely,
we want to specify a polystable invariant bundle $V_5\oplus V_{hid}$ of vanishing first Chern class and whose second Chern class satisfies 
\beqa\label{polyano}
 c_2(X)=c_2(V_5)+c_2(V_{hid}).
\eeqa 
We will take as hidden sector bundle $V_{hid}=\pi^*E_{hid}$ with $\rk(E_{hid})=2$ using again the construction given in Section 4. This bundle is $H$-stable and $\tau_X$-invariant 
and has $c_2(\pi^*E_{hid})=k_3$ with $k_3\geq 2$. Since $H$-stability is an open condition in $H$ according to \cite[Appendix 4.C]{HL}, $\pi^*E_3$ is still stable with respect to $H_{\epsilon}=H+\epsilon D$
if $\epsilon$ is sufficiently small, and then also with respect to $H_{\epsilon, \bar\epsilon} = H_{\epsilon} + \bar\epsilon\D$ if $\bar\epsilon$ is also sufficiently small (possibly depending on $\epsilon$).
Thus the bundle $V_5\oplus V_{hid}$ is polystable as required and we obtain solutions to the above system with $[W]=0$ for $x=-1$ and $\alpha=(2,2)$, $\beta=(-1,1)$ and
\beqa
k_1=2+i, \ \  k_2=5+i, \ \ k_3=33-2i \ \ \  (i=0,\dots,15).
\eeqa
Moreover, as $V_{hid}$ is a pullback bundle of rank two we get $c_3(V_{hid})=0$ and no chiral matter in the hidden sector. 

This class of solutions can now be perturbed  
to a solution of the Strominger system using the results of \cite{AGF1} and \cite{AGF2} and should provide 
a step toward phenomenologically interesting heterotic flux compactifications via non-K\"ahler deformation of Calabi-Yau geometries via polystable bundles.

\end{document}